\documentclass[conference,letterpaper]{IEEEtran}
\IEEEoverridecommandlockouts
\usepackage{cite}
\usepackage{amsmath,amsthm,amssymb,amsfonts}
\usepackage{algorithmic}
\usepackage{graphicx}
\usepackage{textcomp}
\usepackage{xcolor}
\usepackage{indentfirst}
\usepackage{fancyhdr}
\usepackage{float}
\usepackage{commath}
\usepackage[utf8]{inputenc}
\usepackage{amssymb}
\usepackage{enumerate}
\usepackage[shortlabels]{enumitem}
\usepackage[dvipsnames]{xcolor}
\usepackage{soul}
\usepackage{listings}
\usepackage{bm}
\usepackage{adjustbox}
\usepackage{cuted}
\usepackage{tikz}
\usepackage{pgfplots}
\usetikzlibrary{patterns}
\usepgfplotslibrary{fillbetween,colorbrewer}
\usetikzlibrary{positioning,shadows,shapes,backgrounds,calc,spy,arrows}
\usetikzlibrary{intersections}
\usepackage{pgfplotstable}
\pgfplotsset{compat=newest}
\usepackage{booktabs}
\usepackage{threeparttable}
\renewcommand{\arraystretch}{1.5}
\usepackage{caption}
\DeclareCaptionFont{smallcaps}{\scshape}  % Define small caps font
\usepackage{balance}
\newcommand{\Xset}{\mathcal{X}}

\newtheorem{theorem}{Theorem}

\theoremstyle{definition}
\newtheorem{example}{Example}

\renewcommand{\figurename}{Figure}
\renewcommand{\tablename}{Table}

\newcommand{\E}[1]{\mathbb{E}\left[#1\right]}

\newcommand{\thn}{\mathrm{th}}
\newcommand{\rin}{\mathrm{rin}}
\newcommand\notsotiny{\@setfontsize\notsotiny\@vipt\@viipt}
\newcommand{\Pcw}{\overline{P}_{\mathrm{cw}}}
\newcommand{\OMA}{\mathsf{OMA}}
\newcommand{\ER}{\mathsf{ER}}
\newcommand{\tx}{\mathrm{tx}}
\newcommand{\rx}{\mathrm{rx}}

\renewcommand{\d}{\mathrm{d}}

\DeclareMathAlphabet{\mathbbold}{U}{bbold}{m}{n}
\newcommand{\qmetric}{\mathbbold{q}}

\def\BibTeX{{\rm B\kern-.05em{\sc i\kern-.025em b}\kern-.08em
    T\kern-.1667em\lower.7ex\hbox{E}\kern-.125emX}}
    
\begin{document}

\title{Optical Communications with Relative Intensity Noise: Channel Modeling and Information Rates
\thanks{This research is part of the project COmplexity-COnstrained LIght-coherent optical links (COCOLI) funded by Holland High Tech  $|$ TKI HSTM via the PPS allowance scheme for public-private partnerships, and of the project AI-SUSAT with grant ID OAARE97526 of the research programme AiNed XS Europe financed by the Dutch Research Council (NWO).}
}

\author{\IEEEauthorblockN{Felipe Villenas}
\IEEEauthorblockA{\textit{Eindhoven University of Technology}\\
Eindhoven, The Netherlands\\
f.i.villenas.cortez@tue.nl}
\and
\IEEEauthorblockN{Yunus Can G\"ultekin}
\IEEEauthorblockA{\textit{Eindhoven University of Technology}\\
Eindhoven, The Netherlands\\
y.c.g.gultekin@tue.nl}
\and
\IEEEauthorblockN{Alex Alvarado}
\IEEEauthorblockA{\textit{Eindhoven University of Technology}\\
Eindhoven, The Netherlands\\
a.alvarado@tue.nl}
}

\IEEEaftertitletext{\vspace{-0.5em}}
\maketitle

\begin{abstract}
We consider optical communications with intensity modulation and direct detection affected by laser relative intensity noise (RIN). 
Starting from a continuous-time waveform model, we derive an equivalent discrete-time channel model. As a result of RIN, the resulting channel model exhibits signal-dependent noise with memory. Unlike the commonly-assumed model in the literature, the conditional variance of this noise term has a polynomial dependence on the symbol of interest. Finally, we study achievable information rates for this channel under practically-relevant system parameters. We take a mismatched decoding approach and compute the generalized mutual information (GMI) using a memoryless decoding metric. 
Our numerical results show that when the memory in the channel is ignored by the receiver, GMI saturates as the constellation size increases, and thus, dense constellations do not offer gains. We also show that this saturation results from nonsymmetric nonvanishing contributions of the symbols to the GMI.
\end{abstract}

% \begin{IEEEkeywords}
% Optical communications, intensity modulation and direct detection, relative intensity noise, achievable information rates, mismatched decoding, generalized mutual information. %fiber optics,
% \end{IEEEkeywords}

\section{Introduction}

%\lipsum[1-5]

\IEEEPARstart{F}{ueled} by the exploding demand from artificial intelligence applications, the need for increasing data rates in short-reach data center interconnects (DCI) is unstoppable \cite{zhou2019beyond}. Maintaining relatively low hardware costs is also of key importance, and thus, the vast majority of these optical links employ intensity modulation (IM) and direct detection (DD) where only the amplitude of the transmitted optical light is modulated to carry information~\cite{che2023modulation}. Pulse amplitude modulation (PAM) is typically employed in these systems, enabling low-complexity transceivers with low power consumption \cite{wei2020experimental}.

Rates up to 200 Gbps have been shown to be feasible for IM-DD systems with PAM-4~\cite{baseline_200G}. To achieve 400 Gbps, doubling the symbol rate is the most straightforward option. However, increasing the symbol rate also means integrating the noise power spectral density (PSD) over a larger bandwidth. Thus, the transmit power must also increase to maintain or improve the signal-to-noise ratio. This increase in signal power leads to the generation of signal-dependent noise caused by relative intensity noise (RIN) of the laser. 

RIN is typically measured from a constant light source (e.g., a laser), without an information-carrying signal \cite{agrawal2013semiconductor,RIN_Keysight}. RIN is then modeled as memoryless with power proportional to the square of the instantaneous light intensity \cite{agrawal2013semiconductor}. In PAM systems, this assumption on RIN results in a model widely used in the literature (see, e.g., \cite{szczerba20124,rizzelli2023analytical,maneekut2024reach}), where the noise power is proportional to the square of the transmitted symbol.  
In this paper, we present a detailed analysis of RIN-dominated high-speed IM-DD systems. 
Starting from the continuous-time transfer characteristics of the signal processing blocks in the transceiver chain, we develop a discrete-time equivalent channel model that explicitly accounts for the statistics of the transmitted symbol sequence. 
The resulting model demonstrates that the commonly used signal-dependent noise model in the literature is imprecise, as it computes the noise statistics neglecting the properties of the information-carrying signal.
% In this paper, we take a detailed look at RIN-dominated high-speed IM-DD systems.
% Starting from continuous-time transfer characteristics of signal processing blocks in the transceiver chain, we develop a discrete-time equivalent channel model, now also considering the statistics of the symbol sequence.
% The resulting model we develop in this paper reveals that the model used in the literature is imprecise, since the statistics of the signal-dependent noise are computed neglecting the properties of the information-carrying signal. 

The contributions of this paper are threefold.
Firstly, we develop a novel analytical  discrete-time channel model for such systems. The proposed model---unlike the one often used in the literature\cite{szczerba20124,rizzelli2023analytical,maneekut2024reach}---is a channel with memory. 
Secondly, we show through numerical simulations that the proposed model accurately characterizes the signal-dependent noise for arbitrary continuous time PAM waveforms generated with band-limited pulses.
Lastly, we evaluate achievable information rates (AIRs) of the proposed channel model taking a mismatched decoding perspective, where the decoding metric is assumed to be memoryless. Our AIR results show that for typical practical system parameters, using $M$-PAM modulation formats with more than $8$ points results in negligible gains, unless the channel memory is taking into account.

The paper is organized as follows.
Sec.~\ref{sec:system_model} describes the system model and derives the novel discrete-time channel model with memory.
The statistical properties of the noise and memory are investigated in Sec.~\ref{sec:noisestat}.
AIRs are studied in Sec.~\ref{sec:air} before the conclusions are drawn in Sec.~\ref{sec:conclusions}.

%In IM-DD optical links, the increase in signal power leads to signal-dependent noise \cite{szczerba20124}. Without optical amplification, relevant signal-dependent noise sources to take into consideration are the shot noise generated at the receiver \cite{safari2015efficient,van2018optimization}, and the relative intensity noise (RIN) generated from the transmitter laser \cite{baveja201756,villenas2025ecoc}. The latter is the dominant noise source in the high optical power regime \cite[Sec.~II]{szczerba20124}. Consequently, one of the main challenges for achieving 400 Gb/s/lane and beyond, is to address signal-dependent noise. Due to the reasons above, here we focus on IM-DD systems without optical amplification subject to the presence of RIN. 
\section{System and Channel Model}
\label{sec:system_model}

\begin{figure*}[!t]
\vspace{0.3cm}
    \centering
    \begin{adjustbox}{width=0.9\linewidth,trim={1.95cm 0 0.25cm 0}, clip}
        \definecolor{ashgrey}{rgb}{0.75, 0.75, 0.75}
\definecolor{antiquebrass}{rgb}{0.98, 0.81, 0.69}
\definecolor{brilliantlavender}{rgb}{0.96, 0.73, 1.0}
\definecolor{bgcolor}{rgb}{0.97, 0.98, 0.96}
\definecolor{bgcolor2}{rgb}{0.99, 0.93, 0.89}
\newcommand{\OColor}{blue!80!black}

\tikzstyle{block} = [draw, line width = 0.6pt, fill=black!20, rectangle, minimum height=30pt, rounded corners=0.1cm, text width=2.5em,align=center]

\tikzstyle{block_wide} = [draw, line width = 0.6pt, fill=black!20, rectangle, minimum height=30pt, rounded corners=0.1cm, text width=3.5em,align=center]

\tikzstyle{block_wide2} = [draw, line width = 0.6pt, fill=black!20, rectangle, minimum height=20pt, rounded corners=0.1cm, text width=3em,align=center]

\tikzstyle{block_wide3} = [draw, line width=0.6pt, fill=black!20, rectangle, minimum height=40pt, rounded corners=0.1cm, text width=7.0em,align=center]

\tikzstyle{block2} = [draw, line width = 0.6pt, fill=black!20, rectangle, minimum height=30pt, minimum width=30pt, rounded corners=0.1cm, text width=5em,align=center]

\tikzstyle{Cir} = [draw, circle,  minimum size=2.15em]
\tikzstyle{circ} = [circle, draw, minimum size=10pt, text centered,inner sep=0pt]

\tikzstyle{gain} = [draw, line width=0.5pt, fill=gray!20, isosceles triangle, isosceles triangle apex angle=60, minimum height=25pt, minimum width=20pt, text width=1em, align=center]

\newcommand{\shotNoise}{0}  % Include shot noise
\newcommand{\lw}{0.7}       % Line width

\begin{tikzpicture}[auto, node distance=1 cm,>=to,line width=\lw]

    %%%% IM/DD
    %% placing the blocks
    % TX 
    \node [coordinate] (input) {};  
    %\node [block, right = 1.5em of input, fill = gray!30] (DAC) {$\!f^{-1}(\cdot)$};
    %\node [block, right = 2em of input, fill = gray!30] (opticalPAM) {$\eta(\cdot+\beta)$};
    \node [block, right = 2em of input, fill = gray!30] (pulse) {Filter $p(t)$};
    \node [block, right = 2.5em of pulse, fill = gray!30] (DAC) {Pre Dist.};
    %\node[circ,right=2.5em of pulse,xshift=0pt](DAC){\large $\times$};
    %\node [below=1.3em of DAC](tx_gain){$1/\Pcw$};
    \node [block,right = 1.5em of DAC, fill = red!20] (EO) {MZM};
    %\node [below=0em of EO](){$f_{\mathrm{MZM}}(\cdot)$};

    % --- laser block ---
    \node[block_wide3,dashed,above=1.1em of DAC,xshift=-20pt,fill=red!20] (laser) {};
    \node[left=0.7em of laser,rotate=90,xshift=1.5em](){Laser};
    \node[circ,above=1.5em of DAC,xshift=6pt](sum1){\large $+$};
    \node[left=1.5em of sum1](P_A){$\Pcw$};
    \node[above=0.2em of P_A](P_A_rin){$\Pcw\!\cdot\!N_{\mathrm{rin}}(t)$};
    % \node [block_wide2,above = 1.5em of EO, fill = brilliantlavender] (Laser) {Laser};
    % \node[above=0em of Laser](){RIN};

    % Fiber
    \draw [\OColor,solid,line width=0.5pt,opacity=1] ($(EO.east)+(2.3,0.3)$) circle (3mm);
    \draw [\OColor,solid,line width=0.5pt,opacity=1] ($(EO.east)+(2.4,0.3)$) circle (3mm);
    \draw [\OColor,solid,line width=0.5pt,opacity=1] ($(EO.east)+(2.5,0.3)$) circle (3mm);
    \node[right=1.85cm of EO, yshift=-1.8em](fiber_mid){($L$ km)};
    \node[right=0.7em of EO, yshift=0.8em](){$P_{\mathrm{tx}}(t)$};

    % % RX
    %\node [block_wide, right = 5 cm of EO, fill = green!20] (OE) {PD-TIA};
    \node [block, right = 5cm of EO, fill = red!20] (OE) {PD};
    \node[left=0.9em of OE, yshift=0.8em](Prx){$P_{\mathrm{rx}}(t)$};
    \node[below=0em of OE](){$\mathfrak{R}$};
    
    \node[circ,right=2.5em of OE](sum3){\large $+$};
    \draw [draw,-latex] (OE) -- node[midway,above]{$I(t)$} (sum3);

    \node [above=1.8em of sum3](noise){$N_{\thn}(t)$};
    \node [circ, right = 1.5em of sum3] (TIA) {\large $\times$};
    \node [below=1.3em of TIA](rx_gain){$G$};
    \node [block, right = 2.5em of TIA, fill = gray!30] (ADC) {Filter $h(t)$};
    \node [below=0em of ADC](){};
    \node [right=0.1em of ADC,yshift=8pt](){$Y(t)$};
    \node [right=2.5em of ADC] (sampling) {};
    \node [below=0.4em of sampling, xshift=0.8em] () {\small{$t=kT$}};
    \node [right = 0.8em of sampling] (output) {};

    % % We draw edges between nodes
    \draw [draw,-latex] (input) -- node[left,text width=2.2em, align = left]{$X_k$}(pulse);
    % \draw [draw,-latex] (input) -- node[left]{$X$} (OOK);
    %\draw [draw,-latex] (opticalPAM) -- node[midway,above]{$A_k$}(pulse);
    \draw [draw,-latex] (pulse) -- node[midway,above]{$S(t)$}(DAC);
    \draw [draw,-latex] (DAC) -- node[midway,above]{}(EO);
    %\draw [draw,-latex] (tx_gain) -- (DAC);
    \draw [draw,-latex] (rx_gain) -- (TIA);
    
    %\draw [draw,-latex, color=blue] (Laser) -- (EO);
    \draw [draw,-latex, color=\OColor] (sum1) -| node[midway,right,xshift=-3.0em,yshift=0.8em]{\textcolor{black}{$P_{\mathrm{cw}}(t)$}}(EO);
    \draw [draw,-latex, color=\OColor] ($(P_A)+(1.0em,0)$) -- (sum1);
    %\draw [draw,-latex, color=\OColor] ($(P_A_rin)+(0,-0.6em)$) -- (sum1);
    \draw [draw,-latex, color=\OColor] ($(P_A_rin.east)+(-2pt,0)$) -| (sum1);
    
    \draw [draw,-latex, color=\OColor] (EO) -- node[midway,below]{\textcolor{black}{Fiber}}(OE);
    \draw [draw,-latex] (noise) -- (sum3);

    \draw [draw,-latex] (sum3) -- (TIA);
    \draw [draw,-latex] (TIA) -- node[midway,above]{$V(t)$}(ADC);
    % \draw [draw,-latex] (INTER2) -- node[midway,above]{$y_k$} (HD);
    \draw[draw] (ADC) -- (sampling);
    \draw [draw] ($(sampling.west)+(0.2em,1em)$) -- ($(output.west)$);
    \draw [draw,-latex] ($(output.west)$) --  node[right,text width=2.4em, align = right]{$Y_k$}($(output.west)+(3em,0)$);

    % sampling arrow
    \draw[->, >=latex, densely dashed, gray, bend right=20] ($(sampling)+(0.6em,1.2em)$) to ($(sampling)+(0.6em,-0.8em)$);

    % Option (1): half-rectangle and channel on top on bottom line      

    \node[below=5em of Prx,xshift=-1.2em](ch_eq){Theorem~\ref{theo:chanmod}: $Y_k = X_k + Z_k + Q_k$};%{$Y_k=X_k+Z_k\sqrt{\sigma_{\mathrm{ele}}^2 + (X_k+\beta)^2\sigma_{\mathrm{rin}}^2}$};
    \draw[draw,-stealth,dotted,thick,color=gray] ($(pulse.south)+(-1.05,1.5em)$) |- (ch_eq);
    \draw[draw,-stealth,dotted,thick,color=gray] ($(ADC.south)+(7.8em,1.5em)$) |- (ch_eq);

    % Legend
    \node[above = 2.075cm of input, xshift=+16.0em] (elec_0){};
    \node[above = 2.075cm of input, xshift=18.5em] (elec_1){};
    \node[below = 0.4em of elec_0, xshift=0em] (opt_0){};
    \node[below = 0.4em of elec_1, xshift=0em] (opt_1){};
    
    \draw[draw,-latex] (elec_0) -- node[right,text width=4.3em, align=left,xshift=0.6em]{\footnotesize Electrical}(elec_1);
    \draw[draw,-latex,\OColor] (opt_0) -- node[right,text width=3.5em, align=left,xshift=0.6em]{\footnotesize \textcolor{black}{Optical}}(opt_1);
    
    \begin{pgfonlayer}{background}
        \draw[dotted,fill=bgcolor,rounded corners=2pt] ($(pulse.west)+(-0.25,-1.2)$) rectangle ($(EO.east)+(+0.15,2.5)$);     % Tx
        \draw[dotted,fill=bgcolor,rounded corners=2pt] ($(EO.east)+(+0.30,-1.2)$) rectangle ($(OE.west)+(-0.40,1.6)$);     % Fiber
        \draw[dotted,fill=bgcolor,rounded corners=2pt] ($(OE.west)+(-0.25,-1.2)$) rectangle ($(ADC.east)+(+1.9,1.6)$);     % Rx

        \node[below=1.1em of fiber_mid](optical_text){Optical Channel};
        \node[left=7em of optical_text](){Transmitter};
        \node[right=12em of optical_text](){Receiver};
        
    \end{pgfonlayer}
    
    % Option (2): full-rectangle and channel below bottom line 
    % \draw[dashed,thick,color=gray] ($(DAC.south)+(-0.95,8em)$) -- ($(DAC.south)+(-0.95,-1.6em)$);
    % \draw[dashed,thick,color=gray] ($(DAC.south)+(-0.95,8em)$) -- ($(ADC.south)+(0.95,8em)$);
    % \draw[dashed,thick,color=gray] ($(ADC.south)+(0.95,8em)$) -- ($(ADC.south)+(0.95,-1.6em)$);
    % \draw[dashed,thick,color=gray] ($(DAC.south)+(-0.95,-1.6em)$) -- ($(ADC.south)+(0.95,-1.6em)$);
    % \node[right=1.3cm of DAC, yshift=-4.1em](){$Y=X+Z\sqrt{\sigma_{\mathrm{thn}}^2 + X^2\sigma_{\mathrm{rin}}^2}$};
 
\end{tikzpicture}
    \end{adjustbox}
    \caption{Equivalent high-speed IM-DD system with laser RIN and external modulation.}
    \label{fig:system_IMDD}
    \vspace{-0.5em}
\end{figure*}
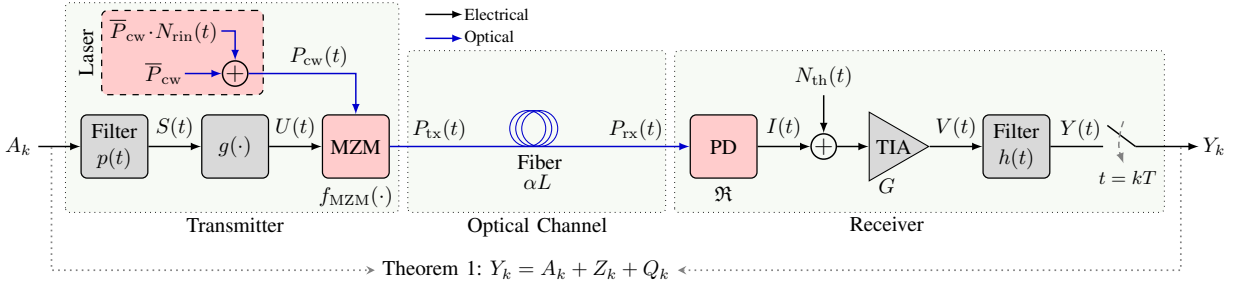

We consider the baseband equivalent IM-DD system shown in Fig.~\ref{fig:system_IMDD}. 
We denote the $M$-PAM constellation by the random variable (RV) $X\in\Xset$, with $X>0$ and $|\Xset|=M$. The outermost symbols of the constellation are determined by the optical modulation amplitude (OMA) and extinction ratio (ER) parameters, which are defined as
\begin{align}
    %\OMA \triangleq \mathsf{a}_M-\mathsf{a}_1\; \text{and}\;\ER \triangleq \mathsf{a}_M/\mathsf{a}_1.\label{eq:OMA}
    \OMA \triangleq \max\{\Xset\}-\min\{\Xset\}\; \text{and}\;\ER \triangleq \frac{\max\{\Xset\}}{\min\{\Xset\}}.\label{eq:OMA}
\end{align}
A sequence of PAM symbols $\{X_n\}_{n\in\mathbb{Z}}$ is transmitted, where the elements $X_n$ are independent and uniformly distributed (i.u.d.) RVs. Pulse shaping with a filter impulse response $p(t)$ is used. At the output of the transmitter filter, the generated waveform is
\begin{equation}
    S(t) = \sum_{n\in\mathbb{Z}}X_n p(t-nT),
\end{equation}
where $T$ is the symbol period defined by the signaling rate $R_\mathrm{s} = 1/T$. The filter $p(t)$, normalized to unit amplitude $p(0)=1$, is an arbitrary pulse shape that, in combination with the filter $h(t)$ at the receiver (see Fig.~\ref{fig:system_IMDD}) forms a combination that satisfies the zero intersymbol interference (ISI) Nyquist criterion, e.g., both can be root-raised-cosine (RRC) filters. We assume that \mbox{$\ER<\infty$} and that $\min\{\Xset\}$ is selected such that \mbox{$\mathrm{Pr}\{S(t)<0\}<\epsilon$} where a typical value for $\epsilon=10^{-5}$ \cite[Sec.~V-A]{che2021does}. This assumption ensures that $S(t)$ is virtually nonnegative.\footnote{The nonnegativity is imposed by the electro-optical modulator for its input. In practice, negative values at the modulator input are folded onto positive values by the modulator transfer function. For our analysis here with $\epsilon=10^{-5}$, we assume that this never happens.}

% Furthermore, for a constellation $\mathcal{A}$ defined by \mbox{$\ER<\infty$},\footnote{The case when $\ER \to \infty$ happens when $\min\{\mathcal{A}\}=0$ and may only hold when using rectangular pulses with infinite bandwidth.} the pulse $p(t)$ must satisfy (i) $p(0)=1$, and (ii) \mbox{$\mathrm{Pr}\{S(t)<0\}<\epsilon$}. For (root)-raised-cosine pulses, the roll-off factor may enhance the peak of the waveform $S(t)$. Thus, in order to fulfill (ii) the maximum $\ER$ as a function of the roll-off factor is shown in Fig.~\ref{fig:rolloff_ER}. The maximum $\ER$ value is obtained by using the cumulative distribution function of $S(t)$, similar to \cite[Sec.~V-A]{che2021does}, and considering $\epsilon=10^{-5}$.

% \begin{figure}[!t]
%     \centering
%     \input{FigTikz/rolloff_maxER.tikz}
%     \caption{Maximum attainable $\ER$ vs the pulse roll-off factor. The admissible regions correspond to a feasible $\ER$ that satisfies the positivity constraint of the optical power \eqref{eq:P_tx}.}
%     \label{fig:rolloff_ER}
% \end{figure}

Next, we consider a laser in continuous-wave (CW) operation alongside a Mach-Zehnder modulator (MZM) to modulate the amplitude of the optical wave. The output optical power of the laser can be modeled as \cite[Eq.~(2.2)]{seimetz2009high}
\begin{equation} \label{eq:Pcw}
    P_{\mathrm{cw}}(t) = \Pcw\left(1 + N_{\rin}(t)\right),
\end{equation}
where $\Pcw$ is the laser's average optical power and $N_{\rin}(t)$ models the RIN. Following \cite[Ch.~3.3.6]{hui2019introduction}, we model RIN as a continuous-time zero-mean white Gaussian random process with \mbox{$\E{N_{\rin}(t)}\!=\!0$}, \mbox{$R_{{\rin}}(\tau)\!=\!\frac{1}{2}N_0^\rin\delta(\tau)$}, and \mbox{$S_\rin(f)\!=\!\frac{1}{2}N_0^\rin$}, where $R_\rin(\tau)$ and $S_\rin(f)$ are the autocorrelation function and double-sided PSD of $N_{\rin}(t)$, respectively. 

%\subsubsection{MZM Pre-Distortion}
At the fiber input, the transmit signal $P_{\tx}(t)$ should be as close as possible to the PAM waveform $S(t)$. The device used for the electrical-to-optical conversion is an MZM in push-pull configuration. The amplitude of the signal $S(t)$ is pre-distorted to account for the nonlinear MZM transfer function. The pre-distortion step (``Pre Dist.'' block in Fig.~\ref{fig:system_IMDD}) uses the inverse MZM transfer function and normalizes the amplitude by the laser power $\Pcw$, similar to \cite[Eq.~(9)]{napoli2017digital}. Without loss of generality, we consider that the output of the modulator is
\begin{align}
    P_\tx(t) = S(t)\left(1+N_\rin(t)\right). \label{eq:P_tx}
\end{align}
Although $N_\rin(t)$ has infinite support, RIN values for lasers typically range from $[-160,-130]$ dB/Hz \cite[Ch.~2.1.1]{seimetz2009high}. Hence, for a sampling rate of $2B$, $\mathrm{Pr}\{N_\rin(k/2B) < -1\}< 10^{-13}$, and thus, noise samples are assumed to not violate the positivity constraint of the optical power in \eqref{eq:Pcw} and \eqref{eq:P_tx}.\footnote{For a RIN value of $\frac{1}{2}N_0^\rin=10^{-14}$ ($-140$ dB/Hz) over a bandwidth of $B=10^{12}$ Hz, $\mathrm{Pr}\{N_\rin(k/2B)<-1\}=\mathsf{Q}\left(({N_0^\rin B})^{-1/2}\right)\approx 10^{-13}$, with $\mathsf{Q}(\cdot)$ being the Gaussian Q-function.} 

The main application of high-speed IM-DD systems is intra-DCI for very limited reach, where the fiber length is expected to be $L\!<\!1$ km for 400 Gbps and beyond. We assume that the laser is operated in the O-band with wavelength $1310$ nm, where chromatic dispersion is negligible \cite[Sec.~3.3]{wei2020experimental}. Thus, we only consider attenuation during fiber propagation, i.e., \mbox{$P_{\rx}(t) = \varphi P_\tx(t)$}, where $10\log_{10}(1/\varphi)=\alpha L$ where $\alpha=0.35$ dB/km is the attenuation coefficient at $1310$ nm.%\footnote{Another application where high-speed IM-DD transceivers are used is free-space optical (FSO) communications. Our system model can be adapted for FSO links by adjusting $\alpha$ considering atmospheric attenuation and turbulence.}

%\section{Channel Model}\label{sec:chanmodel}

At the receiver, a single photodiode (PD) is used to generate a current $I(t)=\mathfrak{R}P_{\rx}(t)=\mathfrak{R}\varphi P_\tx(t)$, where $\mathfrak{R}$ is the responsivity in A/W. The current is then amplified and converted to a voltage $V(t)$ with a transimpedance amplifier (TIA), with gain $G$ in units of $\Omega$. The gain is set to $G=(\varphi\mathfrak{R})^{-1}$ to compensate for losses and PD responsivity. Also, during this process, thermal noise is generated which is modeled by $N_\thn(t)$, a zero-mean white Gaussian random process with \mbox{$\E{N_{\thn}(t)}\!=\!0$}, and \mbox{$S_\thn(f)\!=\!\frac{1}{2}N_0^\thn$}.
% \begin{equation}
%     \E{N_{\thn}(t)}\!=\!0,\;\ R_{{\thn}}(\tau)\!=\!\frac{N_0^\thn}{2}\delta(\tau),\;\ S_\thn(f)\!=\!\frac{N_0^\thn}{2}.
% \end{equation}
The output of the TIA is the voltage
\begin{align}
    V(t) &= G\left(\mathfrak{R}P_\rx(t) + N_\thn(t)\right) \label{eq:V_t_1}\\
    %G\left(I(t) + N_\thn(t)\right)\\
    %&= 
    %&= G\left(\mathfrak{R}\varphi P_\tx(t)+N_\thn(t)\right)\\
    %&= P_\tx(t) + GN_\thn(t)\\
    &= S(t) + S(t)N_\rin(t) + GN_\thn(t), \label{eq:V_t_2}
\end{align}
where we used \eqref{eq:P_tx} to get from \eqref{eq:V_t_1} to \eqref{eq:V_t_2}. The voltage $V(t)$ is then passed through the receiver filter $h(t)$. The filter $h(t)$ has unit gain and \mbox{$\|h\|_2^2=\int_{\mathbb{R}}h^2(t)\d t=R_\mathrm{s}$}. After filtering, samples are taken at the symbol rate, generating the observations $Y_k$ (see Fig.~\ref{fig:system_IMDD}).

% \textcolor{red}{We assume that the equalization stage is capable of fully removing the ISI arising due to the bandwidth limitation of the electronics. Thus, the analyzed equivalent channel is ISI-free \cite[Sec.~IV]{che2023modulation}, and we focus solely on the effect of the RIN on the received signal.}
\begin{theorem}[\textit{Channel Model}] \label{theo:chanmod}
The discrete-time equivalent channel model for the system in Fig.~\ref{fig:system_IMDD} is given by
\begin{equation} \label{eq:Y_k}
    Y_k = X_k + Z_k + Q_k,
\end{equation}
where $Q_k$ are independent and identically distributed Gaussian samples \mbox{$Q_k\!\sim\!\mathcal{N}\left(0,\sigma_Q^2\right)$} with $\sigma_Q^2=\frac{1}{2}N_0^\thn G^2 \|h\|_2^2$, and $Z_k$ is a signal-dependent noise term given by
\begin{align} \label{eq:Zk} \hspace{-4pt}
    Z_k &=\! \sum_{n\in\mathbb{Z}} X_n\! \underbrace{\int_{-\infty}^{\infty}\!\!p((k-n)T-\tau)N_\rin(kT-\tau)h(\tau)\d\tau}_{=: W_{k,n}}.
\end{align}
\end{theorem}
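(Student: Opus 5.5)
The plan is a direct computation: filter the voltage $V(t)$ in \eqref{eq:V_t_2} with $h(t)$, sample at $t=kT$, and split the result by linearity into three terms, one per summand of \eqref{eq:V_t_2}. Writing $Y_k=(V*h)(kT)=\int_{\mathbb{R}}V(kT-\tau)h(\tau)\,\d\tau$, we get
\begin{align*}
Y_k = \int_{\mathbb{R}} S(kT-\tau)h(\tau)\,\d\tau + \int_{\mathbb{R}} S(kT-\tau)N_\rin(kT-\tau)h(\tau)\,\d\tau + G\int_{\mathbb{R}} N_\thn(kT-\tau)h(\tau)\,\d\tau .
\end{align*}
I will identify these three terms with $X_k$, $Z_k$ and $Q_k$, respectively.

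\emph{Signal term.} Substitute $S(t)=\sum_n X_n p(t-nT)$ and exchange the sum and the integral. This gives $\sum_n X_n (p*h)((k-n)T)$. The pair $p,h$ satisfies the zero-ISI Nyquist criterion, and with the normalizations $p(0)=1$ and unit gain of $h$ we take $(p*h)(mT)=\delta_{m,0}$. The term therefore reduces to $X_k$. The normalization convention is the one place where the paper's assumptions must be read so that the combined pulse has unit value at the origin.

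\emph{RIN term.} Again substitute the series for $S$ and exchange the sum and the integral. We obtain $\sum_n X_n\int p((k-n)T-\tau)N_\rin(kT-\tau)h(\tau)\,\d\tau$, which is exactly $\sum_n X_n W_{k,n}=Z_k$ as in \eqref{eq:Zk}. No further simplification is claimed, so this step is purely algebraic. Since $N_\rin$ is white, these integrals should be interpreted as Wiener (stochastic) integrals, i.e., as $L^2$ limits. The interchange of sum and integral is then justified in mean square: the $X_n$ are bounded because $\ER<\infty$, and $p$ decays fast enough (e.g., RRC) that the series converges in $L^2$.

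\emph{Thermal term.} $Q_k=G\int N_\thn(kT-\tau)h(\tau)\,\d\tau$ is a linear functional of a Gaussian process, so it is zero-mean Gaussian. Its variance follows from $R_\thn(\tau)=\tfrac12 N_0^\thn\delta(\tau)$:
\begin{align*}
G^2\iint h(\tau)h(\tau')\,\tfrac12 N_0^\thn\,\delta(\tau-\tau')\,\d\tau\,\d\tau' = \tfrac12 N_0^\thn G^2\|h\|_2^2 .
\end{align*}
For independence across $k$, the covariance between $Q_k$ and $Q_{k'}$ is $\tfrac12 N_0^\thn G^2\int h(\tau)h(\tau+(k'-k)T)\,\d\tau$, i.e., the autocorrelation of $h$ sampled at multiples of $T$. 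To show this vanishes for $k\neq k'$, I would argue that $h$ is a Nyquist-orthogonal pulse, which holds for RRC filters, so that $h*\tilde h$ is Nyquist. Joint Gaussianity then upgrades uncorrelatedness to i.i.d.

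\emph{Main obstacle.} The delicate step is this last one, since the paper only states that $p*h$ is Nyquist, not that $h$ is orthogonal to its own $T$-shifts. I would make the RRC (matched-filter) choice $p=h$ up to scaling, as in the paper's example, explicit as the assumption that gives both the ISI-free signal term and i.i.d. $Q_k$. Besides this, the other point needing care is the rigorous interpretation of the white-noise integrals and the sum–integral interchange in $Z_k$, handled in mean square as above.
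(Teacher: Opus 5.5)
Your proposal is correct and follows the same route as the paper: filter $V(t)$ with $h$, split the result by linearity into three terms, use the zero-ISI Nyquist property to reduce the signal term to $X_k$, and define $Z_k$ and $Q_k$ as the samples of the other two terms. You also fill in steps the paper skips. You expand $Z_k$ explicitly into $\sum_n X_n W_{k,n}$ and compute $\tfrac12 N_0^\thn G^2\|h\|_2^2$. You also correctly note that independence of the $Q_k$ requires $h$ itself to be $T$-orthogonal, which holds for the RRC choice but is not implied by the paper's stated assumption that only $p*h$ is Nyquist.
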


\begin{proof}
The output of the receiver filter $Y(t)=(V\!\ast\! h)(t)$ is
\begin{align} \label{eq:y_t}
    Y(t)=(S\!\ast \!h)(t) + \underbrace{\left((S\!\cdot\! N_\rin)\! \ast\! h\right)\!(t)}_{=: Z(t)} + \underbrace{G(N_\thn\!\ast\! h)(t)}_{=: Q(t)},
\end{align}
where $Z(t)$ denotes the filtered signal-dependent noise component and $Q(t)$ the filtered thermal noise. 
Since $p(t)$ and $h(t)$ form a pair of filters that satisfy the zero ISI Nyquist criterion, after sampling at $t=kT$ ($k\in\mathbb{Z}$), the symbols $X_k$ are recovered from \mbox{$(S\ast h)(kT)$} in \eqref{eq:y_t}.
Defining $Z_k=Z(kT)$ and $Q_k=Q(kT)$ as the sampled versions of $Z(t)$ and $Q(t)$ in \eqref{eq:y_t}, respectively, we obtain \eqref{eq:Y_k}.
\end{proof}

From \eqref{eq:Zk}, we observe that $Z_k$ depends on the current and neighboring symbols $X_n$ and their corresponding time-shifted pulses \mbox{$p((k-n)T-\tau)$}. Thus, we see via \eqref{eq:Zk} that \eqref{eq:Y_k} is not a memoryless channel model, as is usually assumed in the literature, see, e.g., \cite{szczerba20124,rizzelli2023analytical,maneekut2024reach}. Furthermore, since $Z_k$ is the mixture of two independent random variables, it is not guaranteed that $Z_k$ is also Gaussian. Although \eqref{eq:Zk} can be written in the form $Z_k=\sum_{n\in\mathbb{Z}} X_n W_{k,n}$, we emphasize that this is not the typical ISI setup for band-limited channels, as the coefficients $W_{k,n}$ are random due to the term $N_\rin(kT-\tau)$ inside the integral of \eqref{eq:Zk}.

Although the channel in \eqref{eq:Y_k} generally has memory, the memory coefficients are stochastic. Thus, here we consider the case where we treat the current symbol $X_k$, and any neighboring symbol $X_n$ ($n\neq k$) interference in the signal-dependent noise component $Z_k$ as additive noise. This assumption will also be used for the analysis in Sec.~\ref{sec:air} and is explored in the following example.

\begin{figure}[t]
    \centering
    \pgfplotstableread{FigTikz/data_txt/histogram_fit_rho0.1.txt}\histogramBoth
\pgfplotstableread{FigTikz/data_txt/histogram_rin_rho0.1.txt}\histogramRIN
\pgfplotstableread{FigTikz/data_txt/histogram_thn_rho0.1.txt}\histogramTHN

\def\lw{0.4pt}
\def\opac{0.4}

% \definecolor{Set1}{RGB}{27,158,119}
% \definecolor{Set2}{RGB}{231,41,138}
% \definecolor{Set3}{RGB}{55,126,184}
\definecolor{Set3}{RGB}{27,158,119}
\definecolor{Set2}{RGB}{161,33,33}
\definecolor{Set1}{RGB}{69,99,168}

%\definecolor{Set3}{RGB}{27,158,119}

\begin{tikzpicture}
    \begin{axis}[
        width=1.025\columnwidth, %% make sure it fits 
        height=1.6in,
        font=\footnotesize,
        xlabel={Channel Output $Y_k$ \scriptsize{$(\cdot10^{-3})$}},
        ylabel={Probability Density \scriptsize{$(\cdot10^3)$}},
        xmin=-4, xmax=4,
        ymin=0, ymax=1.1*6,
        y filter/.code={\pgfmathparse{#1*6}\pgfmathresult},
        xtick={-4,-3,...,4},
        xticklabels={,$0.55$,,$0.88$,,$1.22$,,$1.55$,,},
        grid=major,
        grid style = {solid,lightgray!50},
        legend style = {at={(0.5,1.00)}, anchor=south, font=\footnotesize, legend cell align=left, row sep=-0.5ex, column sep=0.5ex,inner sep=0.2ex},
        legend columns=-1,
        ylabel style={yshift=-0.5ex},
        xlabel style={yshift=0.5ex},
        ]

        \addplot[name path=baseline, forget plot] coordinates{(-4,0) (4,0)};
        
        \addplot[Set1, const plot, line width=\lw, name path=histTHN, opacity=\opac, forget plot] table[x index=0, y index=1]{\histogramTHN};
        \addplot[Set1, opacity=\opac] fill between[of=histTHN and baseline];
        \addlegendentry{$Y_k$ (Only $Q_k)\;$};
        
        \addplot[Set2, const plot, line width=\lw, name path=histRIN, opacity=\opac, forget plot] table[x index=0, y index=1]{\histogramRIN};
        \addplot[Set2, opacity=\opac] fill between[of=histRIN and baseline];
        \addlegendentry{$Y_k$ (Only $Z_k)\;$};
        
        \addplot[Set3, const plot, line width=\lw, name path=histBoth, opacity=\opac, forget plot] table[x index=0, y index=1]{\histogramBoth};
        \addplot[Set3, opacity=\opac] fill between[of=histBoth and baseline];
        \addlegendentry{$Y_k$};       
        
    \end{axis}
\end{tikzpicture}
    \vspace{-0.5em}
    \caption{Histogram of the samples $Y_k$. The contributions of the thermal noise $Q_k$ and signal-dependent noise $Z_k$ are also plotted independently.}
    \label{fig:histograms}
    \vspace{-0.5em}
\end{figure}

\begin{example}[\textit{Signal-dependent Noise Simulations}]\label{ex:histo}
We consider the noise component $Z_k$ by performing a continuous-time simulation of the model in Fig.~\ref{fig:system_IMDD} with the equally-spaced constellation \mbox{$\Xset=10^{-3}\{0.55, 0.88, 1.22, 1.55\}$}, along with the parameters specified in Table~\ref{tab:sim_param}. Figure~\ref{fig:histograms} shows the histogram of the received samples $Y_k$. The effect of $Q_k$ is shown by \textit{turning off} $Z_k$, and vice versa. From the histograms, it is clear that the variance of $Z_k$ depends on the amplitude of the transmitted symbol $X_k=x$, i.e., the larger the value of $x$, the larger the variance. 
\end{example}

While it is straightforward to see that \mbox{$Q_k\!\sim\!\mathcal{N}\left(0,\sigma_Q^2\right)$}, the distribution of $Z_k$ is not trivial to determine due to its dependence on the transmitted signal $S(t)$. The statistical properties of $Z_k$ are analyzed next.

% \textcolor{red}{{\bf From RIN paper, possibly repetitive:}
% The discrete-time channel model is given by
% \begin{equation}
%     Y_k = A_k + Z_k + Q_k,
%     % \label{eq:yk}
% \end{equation}
% where \( A_k \in \{a_0, a_1, \ldots, a_{M-1}\} \subset \mathbb{R}^+ \), 
% \( Q_k \sim \mathcal{N}(0, \sigma_Q^2) \), and
% \begin{align}
%     Z_k &= \sum_{m=-\infty}^{\infty} A_m \, S_{k,m}, \\
%     S_{k,m} &= \int_{-\infty}^{\infty} p((k-m)T - \tau) \, p(-\tau) \, N_{\mathrm{RIN}}(kT - \tau) \, d\tau.
% \end{align}
% \begin{itemize}
%     \item \( N(t) \) is a white Gaussian process with PSD \( N_0^{\mathrm{RIN}}/2 \).
%     \item Note that \( S_{k,m} \) is a random variable because of the presence of \( N_{\mathrm{RIN}}(kT - \tau) \) in the integral.
%     \item If \( N_{\mathrm{RIN}}(kT - \tau) \) were not present, then
%     \[
%         S_{k,m} = \int_{-\infty}^{\infty} p((k-m)T - \tau)p(-\tau) \, d\tau =
%         \begin{cases}
%             1, & k = m,\\
%             0, & k\neq m,
%         \end{cases}
%     \]
%     implying a memoryless channel.
% \end{itemize}
% }
\section{Signal-dependent Noise Analysis}\label{sec:noisestat}

The signal-RIN interaction term is often modeled as \mbox{$Z_k\sim\mathcal{N}(0,\sigma^2_{Z}(X_k))$}, where% $\sigma_{Z}^2(\mathsf{a}_n)$ depends only on the square of the transmit symbol, i.e.,
%In the literature, the noise standard model is given by the assumption that $Z_k$ follows a Gaussian distribution \mbox{$Z_k\sim\mathcal{N}(0,\sigma^2_{Z_k})$}, where the variance $\sigma_{Z_k}^2$ is
\begin{equation}\label{eq:model_old0}
    \sigma_{Z}^2(x)=x^2\frac{N_0^\rin}{2}\|h\|_2^2,
\end{equation}
and $x$ is the value of the transmitted symbol at instant $X_k=x$. When $h(t)$ is an ideal brick-wall filter with bandwidth $B$ and unit gain, $\|h\|_2^2=2B$ and \eqref{eq:model_old0} results in $\sigma_{Z}^2(x)=x^2 N_0^\rin  B$, which is the expression most often used (e.g., in \cite[Eq.~(10)]{szczerba20124},\cite[Eq.~(8)]{rizzelli2023analytical}, \cite[Eq.~(8)]{maneekut2024reach}, and \cite[Eq.~(8.2.7)]{hui2019introduction}), i.e., $\sigma_{Z}^2(X_k)$ depends only on the square of the transmit symbol. If both $p(t)$ and $h(t)$ are rectangular pulses in time domain, we see from \eqref{eq:Zk} that $Z_k$ depends only on $X_k$, i.e., the channel is memoryless, and the variance \eqref{eq:model_old0} is correct. However, this implies that $p(t)$ and $h(t)$ have infinite bandwidth. The rest of the paper is therefore focused on practical band-limited pulses.

\begin{table}[t]
\renewcommand{\arraystretch}{1.1}
\centering
\caption{Simulation Parameters and Values}
\label{tab:sim_param}
\vspace{-0.5em}
\begin{threeparttable}
    \begin{tabular}{lclc}
    \toprule
    \textbf{Parameter} & \textbf{Value} & \textbf{Parameter} & \textbf{Value}\\
    \midrule
    Modulation order $M$ & $4$ & $\ER$ & $4.5$ dB\\
    Symbol rate $R_\mathrm{s}$ & $225$ GBd & Noise $N_0^\rin$ & $-140$ dB/Hz\\
    Filters $p(t)$ and $h(t)$ & RRC & Fiber length $L$ & $1$ km\\
    RRC roll-off & $0.1$ & Attenuation $\alpha$ & $0.35$ dB/km\\
    Samples per symbol & $4$ & Responsivity $\mathfrak{R}$ & $0.5$ A/W\\
    %$\OMA$ & $0$ dBm & Noise $(N_0^\thn)^{1/2}$ & $22$ pA/$\sqrt{\text{Hz}}$\\
    $\OMA$ & $0$ dBm & Noise $N_0^\thn$ & $-183$ dBm/Hz\\
    \bottomrule
    \end{tabular}
\end{threeparttable}
\vspace{-1em}
\end{table}

Motivated by the results in Example~\ref{ex:histo}, we are now interested in the conditional first and second moments of the noise samples $Z_k$, conditioned on a given transmitted symbol $X_k=x$. Using \eqref{eq:Zk}, it is straightforward to see that the conditional mean \mbox{$\E{Z_k\!\mid \!X_k\!=\!x}=0$} because $N_\rin(t)$ is zero-mean. The conditional variance \mbox{$\E{Z_k^2\!\mid \!X_k\!=\!x}$}, on the other hand, is given by the following theorem.

\begin{theorem}[\textit{Conditional Variance of $Z_k$}] \label{theo:variance}
    The variance of $Z_k$ conditioned on the symbol $X_k=x$ is
    \begin{align} \label{eq:Zk_var} %\hspace{-3pt}
        \E{Z_k^2|X_k\!=\!x} &= \frac{N_0^\rin}{2}\!\sum_{i\in\mathbb{Z}}\sum_{j\in\mathbb{Z}}\E{X_iX_j\!\mid \!X_k\!=\!x}\Psi_{i,j},
    \end{align}
    where $\Psi_{i,j}$ is defined as
    \begin{align} \label{eq:Psi}
        \Psi_{i,j} = \int_{-\infty}^{\infty}p((k-i)T-\tau)p((k-j)T-\tau)h^2(\tau)\d\tau,
    \end{align}
    and the conditional cross expectation in \eqref{eq:Zk_var} is given by
    \begin{equation} \label{eq:cross_E}
        \E{X_iX_j\!\mid\!X_k\!=\!x}=\begin{cases}
            m_1^2 & (i\neq j) \wedge (i\neq k \wedge j\neq k),\\
            m_1x & (i\neq j) \wedge (i = k \vee j = k),\\
            m_2 & i=j\neq k,\\
            x^2 & i=j=k,
        \end{cases}
    \end{equation}
    where $m_1=\E{X}>0$ and $m_2=\E{X^2}$.
\end{theorem}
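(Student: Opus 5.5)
The plan is to start from the expression for $Z_k$ in Theorem~\ref{theo:chanmod}, $Z_k=\sum_{n\in\mathbb{Z}}X_nW_{k,n}$, square it, and then take the conditional expectation given $X_k=x$. Squaring the double sum gives
\begin{equation*}
Z_k^2=\sum_{i\in\mathbb{Z}}\sum_{j\in\mathbb{Z}}X_iX_j\int\!\!\int p((k-i)T-\tau)p((k-j)T-\tau')N_\rin(kT-\tau)N_\rin(kT-\tau')h(\tau)h(\tau')\,\d\tau\,\d\tau'.
\end{equation*}
The key structural fact is that the symbol sequence $\{X_n\}$ and the RIN process $N_\rin(t)$ are independent. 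This independence persists after conditioning on $X_k$. As a result, the conditional expectation of each summand factors as $\E{X_iX_j\mid X_k=x}\,\E{W_{k,i}W_{k,j}}$. I would justify exchanging expectation with the double sum and the double integral by Fubini/Tonelli. This needs absolute summability of $p$ and $h$ (e.g., RRC pulses decay like $|t|^{-2}$) and bounded second moments of $X$, which hold for a finite constellation.

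Next I would compute $\E{W_{k,i}W_{k,j}}$ using the autocorrelation $R_\rin(\tau-\tau')=\frac{1}{2}N_0^\rin\delta(\tau-\tau')$. Substituting it, the delta collapses the double integral to a single integral:
\begin{equation*}
\E{W_{k,i}W_{k,j}}=\frac{N_0^\rin}{2}\int_{-\infty}^{\infty}p((k-i)T-\tau)p((k-j)T-\tau)h^2(\tau)\,\d\tau=\frac{N_0^\rin}{2}\Psi_{i,j}.
\end{equation*}
This gives \eqref{eq:Zk_var} with $\Psi_{i,j}$ as in \eqref{eq:Psi}. Rigorously, this is the standard Itô-type isometry for integrals of deterministic $L^2$ functions against white noise, so I will treat it as a known identity.

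The last step is the case analysis in \eqref{eq:cross_E}. It uses only that the $X_n$ are i.u.d.
\begin{itemize}
\item If $i\neq j$ and neither equals $k$, independence gives $\E{X_i}\E{X_j}=m_1^2$.
\item If exactly one index equals $k$, say $i=k\neq j$, the result is $x\,\E{X_j}=m_1x$.
\item If $i=j\neq k$, it is $\E{X_i^2}=m_2$.
\item If $i=j=k$, it is $x^2$.
\end{itemize}

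The main obstacle is the analytic justification of the interchange of expectation, infinite sums and stochastic integrals with white noise. In particular, one must confirm that $\sum_{i,j}|\Psi_{i,j}|<\infty$ so that the series in \eqref{eq:Zk_var} converges. I would handle this by bounding $|\Psi_{i,j}|\le\|h\|_\infty^2\int|p((k-i)T-\tau)||p((k-j)T-\tau)|\,\d\tau$ and using the polynomial decay of the band-limited pulses. The probabilistic content itself is routine.
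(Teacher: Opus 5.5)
Your argument is the paper's proof, and its core is correct. The paper also writes $\E{Z_k^2\mid X_k}=\sum_i\sum_j\E{X_iX_j\mid X_k}\,\E{W_{k,i}W_{k,j}}$, collapses $\E{W_{k,i}W_{k,j}}$ to $\frac{1}{2}N_0^{\mathrm{rin}}\Psi_{i,j}$ using the delta autocorrelation, and makes the same four-case split. It leaves the independence of $\{X_n\}$ and $N_{\mathrm{rin}}$, and every convergence question, implicit.

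The one step you add that fails as written is the summability bound.
\begin{itemize}
\item Pulling out $\|h\|_\infty^2$ discards the only factor that localizes $\tau$. After substituting $u=(k-i)T-\tau$, the remaining integral is $c_{i-j}=\int|p(u)|\,|p(u+(i-j)T)|\,\mathrm{d}u$, which depends only on $i-j$.
\item Hence $\sum_{i,j}c_{i-j}=\sum_i\sum_d c_d=\infty$. The diagonal alone already diverges, since it sums $c_0=\|p\|_2^2>0$ once for every $i\in\mathbb{Z}$.
\item The fix is to keep $h^2$ inside the integral:
\[
\sum_{i,j}|\Psi_{i,j}|\le\int_{-\infty}^{\infty}\Big(\sum_{n\in\mathbb{Z}}|p((k-n)T-\tau)|\Big)^2h^2(\tau)\,\mathrm{d}\tau\le C_p^2\,\|h\|_2^2,\qquad C_p=\sup_{\tau}\sum_{m\in\mathbb{Z}}|p(mT-\tau)|.
\]
Here $C_p<\infty$ for a bounded pulse decaying like $|t|^{-2}$, for example RRC with nonzero roll-off.
\end{itemize}

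A cleaner way to organize the rigor is to condition on the whole sequence $\{X_n\}$ first.
\begin{itemize}
\item Given the sequence, $Z_k=\int S(kT-\tau)h(\tau)N_{\mathrm{rin}}(kT-\tau)\,\mathrm{d}\tau$ is a Wiener integral of a deterministic $L^2$ function.
\item The isometry then gives $\E{Z_k^2\mid\{X_n\}}=\frac{1}{2}N_0^{\mathrm{rin}}\int S^2(kT-\tau)h^2(\tau)\,\mathrm{d}\tau$.
\item Expanding $S^2$ and averaging over $X_n$, $n\neq k$, is justified by Fubini with dominating function $(\max\{\Xset\})^2\big(\sum_n|p((k-n)T-\tau)|\big)^2h^2(\tau)$, which is the same bound as above.
\end{itemize}
This route avoids interchanging the expectation with the random series $\sum_n X_nW_{k,n}$.
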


\begin{proof}
    The expected value of \eqref{eq:Zk} squared is $\E{Z_k^2|X_k}=\sum_i\sum_j\E{X_iX_j|X_k}\E{W_{k,i}W_{k,j}}$. Given that \mbox{$R_{\rin}(\tau)=\frac{1}{2}N_0^\rin\delta(\tau)$}, the term $\E{W_{k,i}W_{k,j}}=\frac{1}{2}N_0^\rin \Psi_{i,j}$. For the first two cases in \eqref{eq:cross_E}, i.e., $i\!\neq\! j$, \mbox{$\E{X_iX_j|X_k}=\E{X_i|X_k}\E{X_j|X_k}$}. For the last two cases in \eqref{eq:cross_E}, i.e., $i\!=\!j$, \mbox{$\E{X_iX_j|X_k}=\E{X_i^2|X_k}$}. 
\end{proof}

Note that $\Psi_{i,j}$~\eqref{eq:Psi} is a deterministic quantity, depending only on $p(t)$ and $h(t)$. From \eqref{eq:cross_E}, it can be seen that the variance \eqref{eq:Zk_var} will have a polynomial form \mbox{$\E{Z_k^2\!\mid\!X_k\!=\!x}=\frac{1}{2}N_0^\rin(p_0+p_1x+p_2x^2)$}, where $p_i$ are coefficients that depend on $\Psi_{i,j}$, $m_1$, and $m_2$. Hence, we see from \eqref{eq:Zk_var} that the variance depends not only on $x^2$ as in the traditional model in \eqref{eq:model_old0}, but also on a constant $p_0$ and a scaled version of $x$. 

\begin{example}[\textit{Conditional Variance Calculation}]\label{ex:cond-variance}
%Figure~\ref{fig:s2rin_memory} shows \eqref{eq:Zk_var} as a function of the one-sided memory length $\ell$ for the same parameters in Table~\ref{tab:sim_param}. The true variance \mbox{$\E{Z_k^2|A_k\!=\!\mathsf{a}_n}$}, with $\ell\to\infty$ in \eqref{eq:Zk_var} (black solid curves), is calculated directly from \eqref{eq:Y_k} using a \textit{genie-aided} approach. The colored curves in Fig.~\ref{fig:s2rin_memory} show that \eqref{eq:Zk_var} converges to the true value when considering $\ell\geq 12$. Furthermore, using \eqref{eq:model_old0} (black dashed lines) results in an incorrect variance estimate.
Figure~\ref{fig:s2rin_memory} shows the comparison of calculating the variance using either the common model of \eqref{eq:model_old0} or the conditional variance of \eqref{eq:Zk_var} for a Gaussian fit of the channel samples $Y_k$ obtained from the simulation of Fig.~\ref{fig:system_IMDD}. From the figure, it is clear that \eqref{eq:Zk_var} (green curve) accurately characterizes the noise variance from the channel, while \eqref{eq:model_old0} (red curve) underestimates the variance for the first and second symbols, and overestimates it for the third and fourth ones.
\end{example}

\begin{figure}[t]
    \centering
    \pgfplotstableread{FigTikz/data_txt/histogram_match.txt}\datafit

\def\lw{0.7pt}
\def\opac{0.4}

% \definecolor{Set1}{RGB}{27,158,119}
% \definecolor{Set2}{RGB}{231,41,138}
% \definecolor{Set3}{RGB}{55,126,184}
\definecolor{Set3}{RGB}{27,158,119}
\definecolor{Set2}{RGB}{161,33,33}
\definecolor{Set1}{RGB}{69,99,168}

%\definecolor{Set3}{RGB}{27,158,119}

\begin{tikzpicture}
    \begin{axis}[
        width=1.025\columnwidth, %% make sure it fits 
        height=1.6in,
        font=\footnotesize,
        xlabel={Channel output $Y_k$ \scriptsize{$(\cdot10^{-3})$}},
        ylabel={Probability Density \scriptsize{$(\cdot10^3)$}},
        y filter/.code={\pgfmathparse{#1*1e-3}\pgfmathresult},
        x filter/.code={\pgfmathparse{#1*1e3}\pgfmathresult},
        xmin=0.4, xmax=1.8,
        ymin=-0.04, ymax=4.5,
        ytick={0,1,...,4},
        grid=major,
        grid style = {solid,lightgray!50},
        legend style = {at={(0.99, 0.98)}, anchor=north east, font=\footnotesize, legend cell align=left, row sep=-0.5ex, column sep=0.5ex,inner sep=0.4ex},
        legend columns=-1,
        ylabel style={yshift=-0.5ex},
        xlabel style={yshift=0.5ex},
        ]

        \addplot[name path=baseline, draw=none, forget plot] coordinates{(0,-1) (2e-3,-1)};
        
        \addplot[Set1, draw=none, line width=\lw, name path=hist, opacity=\opac, forget plot] table[x index=0, y index=1]{\datafit};
        \addplot[Set1, opacity=\opac] fill between[of=hist and baseline];
        \addlegendentry{Simulation};

        \addplot[Set2, line width=\lw] table[x index=2, y index=3]{\datafit};
        \addlegendentry{Eq. \eqref{eq:model_old0}};

        \addplot[Set3, line width=\lw] table[x index=2, y index=4]{\datafit};
        \addlegendentry{Eq. \eqref{eq:Zk_var}};
        
        % \addplot[Set2, const plot, line width=\lw, name path=histRIN, opacity=\opac, forget plot] table[x index=0, y index=1]{\histogramRIN};
        % \addplot[Set2, opacity=\opac] fill between[of=histRIN and baseline];
        % \addlegendentry{$Y_k$ (Only $Z_k)\;$};
        
        % \addplot[Set3, const plot, line width=\lw, name path=histBoth, opacity=\opac, forget plot] table[x index=0, y index=1]{\histogramBoth};
        % \addplot[Set3, opacity=\opac] fill between[of=histBoth and baseline];
        % \addlegendentry{$Y_k$};       
        
    \end{axis}
\end{tikzpicture}
    \vspace{-0.5em}
    \caption{Comparison of the variance calculation using the common model $\sigma_Z^2(x)$ of \eqref{eq:model_old0} and the developed expression $\E{Z_k^2|X_k\!=\!x}$ of \eqref{eq:Zk_var}.}
    \label{fig:s2rin_memory}
    \vspace{-0.5em}
\end{figure}

% \begin{figure}[t]
% \vspace{0.1cm}
%     \centering
%     \input{FigTikz/s2rin_memory.tikz}
%     \caption{Conditional variance $\E{Z_k^2|A_k\!=\!\mathsf{a}_n}$ using \eqref{eq:Zk_var} as function of $\ell$.}
%     \label{fig:s2rin_memory}
% \end{figure}
\section{Information-Theoretic Analysis}\label{sec:air}% of the RIN-Dominated IM/DD Channel}

The channel in \eqref{eq:Y_k} is a channel with memory, and thus, the AIR of interest is the mutual information rate of the input and output defined as~\cite[Eq. (2)]{soriaga2007determining}
\begin{equation}
    \mathcal{I}(X;Y) \triangleq 
    \lim_{N \to \infty} \frac{1}{N} I(\boldsymbol{X}; \boldsymbol{Y}),
    \label{eq:mir}
\end{equation}
where $I(\cdot ; \cdot)$ is the mutual information (MI), $\boldsymbol{X} = (X_1, X_2, \ldots, X_N)$, and $\boldsymbol{Y} = (Y_1, Y_2, \ldots, Y_N)$. The MI is an AIR using an optimal decoding metric proportional to $p_{\boldsymbol{Y}|\boldsymbol{X}}(\boldsymbol{y}|\boldsymbol{x})$. The channel capacity is then the supremum of $\mathcal{I}(X;Y)$ over all possible input distributions $P_{\boldsymbol{X}}(\boldsymbol{x})$, potentially with memory.
% \begin{align}
%     \hat{\boldsymbol{x}} 
%     &= \argmax_{\boldsymbol{x} \in \mathcal{X}^N} p_{\boldsymbol{X},\boldsymbol{Y}}(\boldsymbol{x}, \boldsymbol{y})= \argmax_{\boldsymbol{x} \in \mathcal{X}^N} p_{\boldsymbol{Y}|\boldsymbol{X}}(\boldsymbol{y}|\boldsymbol{x}),
% \end{align}
% where we used the fact that $P_{\boldsymbol{X}}(\boldsymbol{x})$ is constant.
% The optimal decoding metric is therefore proportional to $p_{\boldsymbol{Y}|\boldsymbol{X}}(\boldsymbol{y}|\boldsymbol{x})$.

\subsection{Lower Bound via Generalized Mutual Information}

A lower bound on $ \mathcal{I}(X;Y) $ can be obtained by using a mismatched decoding metric $ \qmetric^N(\boldsymbol{x}, \boldsymbol{y}) $ defined as
\begin{equation}
    %\tilde{\qmetric}(\boldsymbol{x}, \boldsymbol{y}) = \prod_{n=1}^{N} \qmetric(x_n, y_n),\label{eq:qay}
    \qmetric^N(\boldsymbol{x}, \boldsymbol{y}) = \prod_{n=1}^{N} \qmetric(x_n,y_n).\label{eq:qay}
\end{equation}%where $\qmetric(x, y) = 1/M \cdot \qmetric(y|x)$.
 Decoding using the product metric in~\eqref{eq:qay} is suboptimal since it is not proportional to $p_{\boldsymbol{Y}|\boldsymbol{X}}(\boldsymbol{y}|\boldsymbol{x})$. This approach leads to the generalized mutual information (GMI) $I_{\qmetric}^{\mathrm{gmi}}$, which lower bounds the mutual information rate in \eqref{eq:mir} as~\cite{Kaplan1993_RatesExponentsCompondChan,Merhav1994_InfoRatesMismDecod,Martinez2009_TransIT_bicm} 
\begin{equation}\label{eq:lowerbound}
    \mathcal{I}(X;Y) 
    \geq I_{\qmetric}^{\mathrm{gmi}}(X;Y)
    \triangleq \max_{s\geq0} I_{\qmetric,s}^{\mathrm{gmi}}(X;Y),
\end{equation}
where~\cite[Eq. (4.35)]{Szczecinski2015_BICMbook},~\cite[Eq.~(11)]{scarlett2014mismatched}%~\cite[Eq. (17)]{alvarado2018applicationsandcomputations}
% \begin{align}
%     I_{\qmetric,s}^{\mathrm{gmi}}(X;Y)
%     &\triangleq \E{
%         \log_2
%         \frac{[\qmetric(X,Y)]^s}{\sum_{x'\in\Xset} P_X(x') [\qmetric(x',Y)]^s}
%     } \notag\\
%     &= \frac{1}{M} \sum_{i=0}^{M-1} 
%        \int_{-\infty}^{\infty}\!\! p_{Y|X}(y|x_i)
%        \log_2 \frac{M[\qmetric(y|x_i)]^s}
%        {\sum_{j=1}^{M} [\qmetric(y|x_j)]^s} \, \mathrm{d}y \notag\\
%     &= \log_2 M + \sum_{i=0}^{M-1}\beta_i,
%     \label{eq:H}
% \end{align}
\begin{align}
    I_{\qmetric,s}^{\mathrm{gmi}}(X;Y)
    &\triangleq \E{
        \log_2
        \frac{[\qmetric(X,Y)]^s}{\frac{1}{M}\sum_{x'\in\Xset} [\qmetric(x',Y)]^s}
    } \notag\\
    &= \log_2 M + \sum_{i=1}^{M}\beta_i,
    \label{eq:H}
\end{align}
and where
\begin{align}\label{beta}
\beta_i \triangleq \frac{1}{M}
       \int_{-\infty}^{\infty} p_{Y|X}(y|x_i)
       \log_2 \frac{[\qmetric(x_i,y)]^s}
       {\sum_{j=1}^{M} [\qmetric(x_j,y)]^s} \, \mathrm{d}y.
\end{align}
For the channel in Theorem~\ref{theo:chanmod}, we choose the metric
\begin{equation}
    \qmetric(x,y) = \frac{1}{\sqrt{\sigma_Q^2 + \sigma_Z^2(x)}}
        \exp \!\left(-\frac{(y-x)^2}{2(\sigma_Q^2 + \sigma_Z^2(x))}\right), \label{eq:py_a}
\end{equation}
with $\sigma_Z^2(x) = \E{Z_k^2 \mid X_k = x }$ given by Theorem~\ref{theo:variance}. The metric in~\eqref{eq:py_a} therefore adds a second level of mismatch, on top the memoryless mismatch in \eqref{eq:qay}. The metric in~\eqref{eq:py_a} makes the assumption that when conditioning on $x$, the channel noise can be modeled as Gaussian, which might not be true. Nevertheless, \eqref{eq:lowerbound}--\eqref{eq:py_a} give a lower bound on the channel capacity, which is what we show in the next section.

\subsection{Numerical Results}

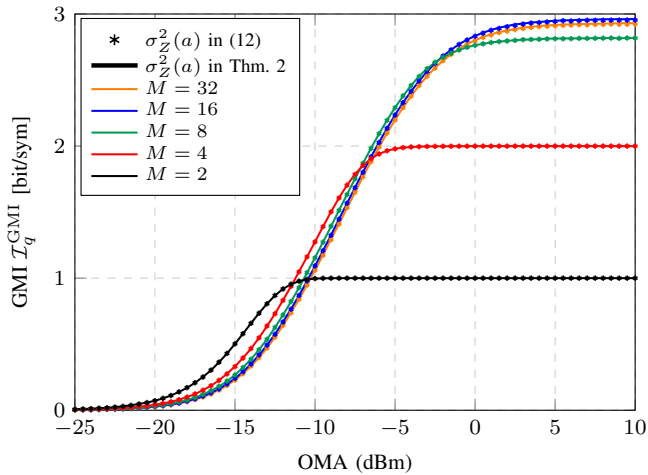
\begin{figure}[t]
    \centering
    \resizebox{\columnwidth}{!}{\begin{tikzpicture}

\definecolor{set1}{RGB}{161,33,33}
\definecolor{set2}{RGB}{69,99,168}
\definecolor{set3}{RGB}{27,158,119}
\definecolor{set4}{RGB}{152,78,163}
\definecolor{set5}{RGB}{255,127,0}
\definecolor{set6}{RGB}{166,118,29}
\definecolor{set7}{RGB}{230,171,2}
\definecolor{set8}{RGB}{231,41,138}

\def\lw{0.75pt}

\begin{axis}[
every axis/.append style={font=\footnotesize},
xlabel={OMA (dBm)},
ylabel={GMI $I_\qmetric^{\mathrm{gmi}}$ [bit/sym]},
width=\columnwidth,  % Same width
height=0.70\columnwidth, % Same height
ylabel near ticks,
xlabel near ticks,
xtick={-25,-20,...,25},
ytick={0,0.5,...,3},
xmin=-25,
xmax=10,
grid style={solid,lightgray!75},
ymin=0,
ymax=3,
grid=major,
legend style={font=\scriptsize, legend cell align=left, at={(0.01,0.99)}, anchor=north west,row sep=-0.5ex},
]

\addlegendimage{black, only marks, line width=\lw, mark=asterisk, mark options={solid,scale=1, fill=white, line width=0.6pt}};
\addlegendentry{$\sigma_Z^2(x)$ in \eqref{eq:model_old0}};
\addlegendimage{black, no marks, line width=2.2*\lw, mark=asterisk, mark options={solid,scale=0.8, fill=white, line width=0.6pt}};
\addlegendentry{$\sigma_Z^2(x)$ in Thm.~\ref{theo:variance}};

\addplot [set5, line width=\lw, mark=none, mark options={solid,scale=0.4, fill=white, line width=0.6pt}]  file[]{data/GMI_vs_OMA_32-PAM_flag_Q_1_flag_Z_1.txt};\addlegendentry{$M = 32$};

% > 16-PAM
\addplot [set4, line width=\lw, mark=none, mark options={solid,scale=0.4, fill=white, line width=0.6pt}]  file[]{data/GMI_vs_OMA_16-PAM_flag_Q_1_flag_Z_1.txt};\addlegendentry{$M = 16$};

% > 8-PAM
\addplot [set3, line width=\lw, mark=none, mark options={solid,scale=0.4, fill=white, line width=0.6pt}]  file[]{data/GMI_vs_OMA_8-PAM_flag_Q_1_flag_Z_1.txt};\addlegendentry{$M = 8$};

% > 4-PAM
\addplot [set2, line width=\lw, mark=none, mark options={solid,scale=0.4, fill=white, line width=0.6pt}]  file[]{data/GMI_vs_OMA_4-PAM_flag_Q_1_flag_Z_1.txt};\addlegendentry{$M = 4$};

% > 2-PAM
\addplot [set1, line width=\lw, mark=none, mark options={solid,scale=0.4, fill=white, line width=0.6pt}]  file[]{data/GMI_vs_OMA_2-PAM_flag_Q_1_flag_Z_1.txt};\addlegendentry{$M = 2$};

% > 32-PAM
% \addplot [set5, line width=\lw, mark=none, mark options={solid,scale=0.4, fill=white, line width=0.6pt}]  file[]{data/GMI_vs_OMA_32-PAM_flag_Q_1_flag_Z_1.txt};\addlegendentry{$M = 32$};
\addplot [set5, only marks, line width=\lw, mark=asterisk, mark repeat=2, mark options={solid,scale=0.5, fill=white, line width=0.6pt}]  file[]{data/second_model/GMI_vs_OMA_32-PAM_flag_Q_1_flag_Z_1_2ndModel.txt};
% \addlegendentry{$M = 32$};

% > 16-PAM
% \addplot [set4, line width=\lw, mark=none, mark options={solid,scale=0.4, fill=white, line width=0.6pt}]  file[]{data/GMI_vs_OMA_16-PAM_flag_Q_1_flag_Z_1.txt};\addlegendentry{$M = 16$};
\addplot [set4, only marks, line width=\lw, mark=asterisk, mark repeat=2, mark options={solid,scale=0.5, fill=white, line width=0.6pt}]  file[]{data/second_model/GMI_vs_OMA_16-PAM_flag_Q_1_flag_Z_1_2ndModel.txt};
% \addlegendentry{$M = 16$};

% > 8-PAM
% \addplot [set3, line width=\lw, mark=none, mark options={solid,scale=0.4, fill=white, line width=0.6pt}]  file[]{data/GMI_vs_OMA_8-PAM_flag_Q_1_flag_Z_1.txt};\addlegendentry{$M = 8$};
\addplot [set3, only marks, line width=\lw, mark=asterisk, mark repeat=2, mark options={solid,scale=0.5, fill=white, line width=0.6pt}]  file[]{data/second_model/GMI_vs_OMA_8-PAM_flag_Q_1_flag_Z_1_2ndModel.txt};
% \addlegendentry{$M = 8$};

% > 4-PAM
% \addplot [set2, line width=\lw, mark=none, mark options={solid,scale=0.4, fill=white, line width=0.6pt}]  file[]{data/GMI_vs_OMA_4-PAM_flag_Q_1_flag_Z_1.txt};\addlegendentry{$M = 4$};
\addplot [set2, only marks, line width=\lw, mark=asterisk, mark repeat=2, mark options={solid,scale=0.5, fill=white, line width=0.6pt}]  file[]{data/second_model/GMI_vs_OMA_4-PAM_flag_Q_1_flag_Z_1_2ndModel.txt};
% \addlegendentry{$M = 4$};

% > 2-PAM
% \addplot [set1, line width=\lw, mark=none, mark options={solid,scale=0.4, fill=white, line width=0.6pt}]  file[]{data/GMI_vs_OMA_2-PAM_flag_Q_1_flag_Z_1.txt};\addlegendentry{$M = 2$};
\addplot [set1, only marks, line width=\lw, mark=asterisk, mark repeat=2, mark options={solid,scale=0.5, fill=white, line width=0.6pt}]  file[]{data/second_model/GMI_vs_OMA_2-PAM_flag_Q_1_flag_Z_1_2ndModel.txt};
% \addlegendentry{$M = 2$};

\end{axis}

\end{tikzpicture}}
    \caption{GMI vs. OMA for the system in Fig.~\ref{fig:system_IMDD} using the parameters in Table~\ref{tab:sim_param}.}
    \label{fig:GMI}
    \vspace{-0.5em}
\end{figure}

Figure~\ref{fig:GMI} shows GMI $I_\qmetric^{\mathrm{gmi}}$ for the parameters given in \mbox{Table~\ref{tab:sim_param}}.
We evaluated \eqref{eq:lowerbound}--\eqref{eq:py_a} using Monte-Carlo integration. %{p.AWGN}
Equally-spaced constellations $\Xset$ for different values of $M$ are derived using the constraints given in \eqref{eq:OMA}. The following observations are made from Fig.~\ref{fig:GMI}. Firstly, lower-cardinality constellations provide higher AIRs except for high OMA values, for which GMI saturates.
Secondly, GMI saturates at $m=\log_2(M)$ for $M\leq4$ only, while it saturates at smaller values for higher $M$.
This saturation for denser constellations can be explained by the signal-dependent noise with higher variance $\sigma^2_{Z}(x)$ in~\eqref{eq:Zk_var} (see Fig.~\ref{fig:histograms}) that high-power constellation points experience. Figure~\ref{fig:GMI} shows that when the channel is RIN-dominated, using $M$-PAM with $M>8$ provides only negligible gains when using a memoryless decoding metric, which is in agreement with recent experimental observations in the literature~\cite[Table I]{pang2020}.
Finally, Fig.~\ref{fig:GMI} also shows GMI results when the variance $\sigma_Z^2(x)$ is calculated via \eqref{eq:model_old0} instead (stars). These results show little difference to those based on Thm.~\ref{theo:variance}, which make us conjecture that taking the memory from $Z_k$ into account is key to achieve higher rates for this channel model.
        
Next, we study in detail the convergence of the GMI as OMA grows. To this end, we consider $32$-PAM with $s=1$ and analyze the values of $\beta_i$ in~\eqref{beta}.\footnote{The optimum value of $s$ in the maximization \eqref{eq:lowerbound} was found to be $s=1$ for medium and high OMA values.} We first show results for the case where the RIN noise is set to zero, i.e., we set $\sigma_Z^2(x)=0$. For this case, Fig.~\ref{fig:betai} shows results for OMA values of $0$, $2$, and $4$~dBm (gray circles). These results show that as OMA grows, the contribution of the symbols (via $\beta_i$) is symmetric with respect to the middle of the constellation. Furthermore, the outermost constellation points contribute more to the GMI. This is expected, as in this case we are effectively considering an AWGN channel, where the contributions of the inner- and outer-most points at high SNRs (equivalently, high OMAs) only depend on the number of nearest neighbors \cite{Alvarado12b}. Lastly, these results show a clear convergence of all values of $\beta_i$ to zero, which results in a GMI saturating at $\log_2{M}$. 

\begin{figure}[t]
    \centering
    \resizebox{\columnwidth}{!}{\begin{tikzpicture}

\definecolor{set1}{RGB}{161,33,33}
\definecolor{set2}{RGB}{69,99,168}
\definecolor{set3}{RGB}{27,158,119}
\definecolor{set4}{RGB}{152,78,163}
\definecolor{set5}{RGB}{255,127,0}
\definecolor{set6}{RGB}{166,118,29}
\definecolor{set7}{RGB}{230,171,2}
\definecolor{set8}{RGB}{231,41,138}

\def\lw{0.75pt}

\begin{axis}[
  yticklabel style={
    /pgf/number format/fixed,
    /pgf/number format/precision=3
  },
every axis/.append style={font=\footnotesize},
xlabel={Constellation Point Index $i$},
ylabel={$\beta_i$ in \eqref{beta}},
width=\columnwidth,  % Same width
height=0.7\columnwidth, % Same height
ylabel near ticks,
xlabel near ticks,
xmin=1,
xmax=32,
xtick={4,8,...,32},
grid style={solid,lightgray!75},
ymin=-0.125,
ymax=0,
grid=major,
legend columns=2,
legend style={font=\scriptsize, legend cell align=left, at={(0.01,0.01)}, anchor=south west,row sep=-0.5ex}, 
]

\addplot [color=black!50!white,, line width=\lw, mark=*, mark options={solid,scale=0.8, fill=white, line width=1pt}]  file[]{FigTikz/data_txt/32-PAM_no_RIN_OMA_0dBm};\addlegendentry{OMA $0$~dBm (no RIN)};
\addplot [color=set1!50!white,, line width=\lw, mark=square*, mark options={solid,scale=0.7, fill=white, line width=1pt}]  file[]{FigTikz/data_txt/32-PAM_w_RIN_OMA_-5dBm};\addlegendentry{OMA $-5$~dBm};
\addplot [color=black!75!white, line width=\lw, mark=*, mark options={solid,scale=0.8, fill=white, line width=1pt}]  file[]{FigTikz/data_txt/32-PAM_no_RIN_OMA_2dBm};\addlegendentry{OMA $2$~dBm (no RIN)};
\addplot [color=set1!75!white,, line width=\lw, mark=square*, mark options={solid,scale=0.7, fill=white, line width=1pt}]  file[]{FigTikz/data_txt/32-PAM_w_RIN_OMA_0dBm};\addlegendentry{OMA $0$~dBm};
\addplot [color=black!1000!white,, line width=\lw, mark=*, mark options={solid,scale=0.8, fill=white, line width=1pt}]  file[]{FigTikz/data_txt/32-PAM_no_RIN_OMA_4dBm};\addlegendentry{OMA $4$~dBm (no RIN)};
\addplot [color=set1!100!white,, line width=\lw, mark=square*, mark options={solid,scale=0.7, fill=white, line width=1pt}]  file[]{FigTikz/data_txt/32-PAM_w_RIN_OMA_5dBm};\addlegendentry{OMA $5$~dBm};
%\addplot [color=red!80!white,, line width=\lw, mark=square*, mark options={solid,scale=0.7, fill=white, line width=1pt}]  file[]{FigTikz/data_txt/32-PAM_w_RIN_OMA_10dBm};\addlegendentry{OMA $10$~dBm};

\end{axis}

\end{tikzpicture}}
    \caption{Contribution of the constellation points via $\beta_i$ in~\eqref{beta} to the GMI for $M=32$.}
    \label{fig:betai}
\end{figure}

The second set of results in Fig.~\ref{fig:betai} (red squares) shows the case where RIN is present in the system. The contribution of the constellation points to the GMI in this case changes considerably. First of all, the symmetry with respect to the middle of the constellation is lost, which is caused by the signal-dependent noise. Furthermore, these results show that as OMA grows, the contributions of $\beta_i$ do not vanish, which explains why the GMI curves in Fig.~\ref{fig:GMI} saturate at values below $\log_2{M}$. This detailed convergence analysis could be used, e.g., to design optimal geometrical and/or probabilistic shaping signaling schemes.

To conclude, we analyze the saturation of the GMI for dense constellations and different RIN values. Figure~\ref{fig:convergence} shows these saturation values at very large OMA ($\text{OMA}=25$~dBm). The black line shows the GMI for the AWGN channel, which is equivalent to our setup with $\mathrm{RIN}\!\to\!-\infty$ dB/Hz. For finite RIN values, the figure shows that the GMI saturates as $M\to\infty$. As an example, for $-140$ dB/Hz (the GMI results shown in Fig.~\ref{fig:GMI}), the saturated GMI has a maximum at $M=16$, and confirms that using values of $M$ larger than $M=8$ provides little or no gain in terms of GMI. As mentioned above, we believe that using a mismatched metric with memory would remove this saturation.
%The results in this figure also show that , however, the gains over $M=8$ are marginal.
\begin{figure}[t]
    \centering
    %\pgfplotstableread{FigTikz/data_txt/GMI_vs_M.txt}\dataM
\pgfplotstableread{FigTikz/data_txt/GMI_vs_M.txt}\dataM

\begin{tikzpicture}

\definecolor{set1}{RGB}{161,33,33}
\definecolor{set2}{RGB}{69,99,168}
\definecolor{set3}{RGB}{27,158,119}
\definecolor{set4}{RGB}{152,78,163}
\definecolor{set5}{RGB}{255,127,0}
\definecolor{set6}{RGB}{166,118,29}
\definecolor{set7}{RGB}{230,171,2}
\definecolor{set8}{RGB}{231,41,138}

\def\lw{0.75pt}

\begin{axis}[
every axis/.append style={font=\footnotesize},
xlabel={Constellation Cardinality $M$},
ylabel={GMI $I_{\qmetric}^{\mathrm{gmi}}$ [bit/sym]},
width=\columnwidth,  % Same width
height=0.6\columnwidth, % Same height
ylabel near ticks,
xlabel near ticks,
xtick={1,2,...,10},
xticklabels={$2$,$4$,$8$,$16$,$32$,$64$,$128$,$256$,$512$,$1024$,$2048$,$4096$},
xmin=1,
xmax=10,
grid style={solid,lightgray!75},
xlabel style={yshift=1.0ex},
ymin=1.0,
ymax=7.0,
grid=major,
legend style={font=\scriptsize, legend cell align=left, at={(0.01,0.98)}, anchor=north west,row sep=-0.5ex},
%xmode=log,
%log ticks with fixed point,
xticklabel style={rotate=45, anchor=east}
]

%\addplot [black, dashed, line width=0.9*\lw, mark=none, forget plot]  coordinates{(2,2.8969) (256,2.8969)};

% \addplot [set1, dashed, line width=1.2*\lw, mark=none, forget plot]  coordinates{(2,2.83166) (256,2.83166)};

% ----- GMI Only Z_k -----
% > 2-PAM
% \addplot [black, line width=1.2*\lw, mark=*, mark options={solid,scale=0.7, fill=white, line width=0.6pt}]  coordinates{(2,1)
%     (4, 1.999)
%     (8, 2.8189)
%     (16, 2.9633)
%     (32, 2.9323)
%     (64, 2.9130)
%     (128, 2.9028)
%     (256, 2.8969)};
    % (512, 2.8936)
    % (1024, 2.8910)
    % (2048, 2.8915)
    % (4096, 2.8912)};

\addplot [black, line width=1.2*\lw, mark=none, mark options={solid,scale=0.7, fill=white, line width=0.6pt}]  coordinates{(1,1) (10, 10)};

\addplot [set4, line width=1.2*\lw, mark=*, mark options={solid,scale=0.7, fill=white, line width=0.6pt}] table[x index=0, y index=4]{\dataM};
\addplot [set3, line width=1.2*\lw, mark=*, mark options={solid,scale=0.7, fill=white, line width=0.6pt}] table[x index=0, y index=3]{\dataM};
\addplot [set2, line width=1.2*\lw, mark=*, mark options={solid,scale=0.7, fill=white, line width=0.6pt}] table[x index=0, y index=2]{\dataM};
\addplot [set1, line width=1.2*\lw, mark=*, mark options={solid,scale=0.7, fill=white, line width=0.6pt}] table[x index=0, y index=1]{\dataM};

% \addplot[set4, mark=*, mark options={scale=0.7}] coordinates{(7,5.2984)};
% \addplot[set3, mark=*, mark options={scale=0.7}] coordinates{(6,4.4918)};
% \addplot[set2, mark=*, mark options={scale=0.7}] coordinates{(5,3.7000)};
% \addplot[set1, mark=*, mark options={scale=0.7}] coordinates{(4,2.9436)};

\addplot[set4, mark=*, mark options={scale=0.7}] coordinates{(7,5.3106)};
\addplot[set3, mark=*, mark options={scale=0.7}] coordinates{(6,4.5032)};
\addplot[set2, mark=*, mark options={scale=0.7}] coordinates{(5,3.7120)};
\addplot[set1, mark=*, mark options={scale=0.7}] coordinates{(4,2.9544)};

\addlegendentry{AWGN};
\addlegendentry{$\mathrm{RIN}\!=\!-155$};
\addlegendentry{$\mathrm{RIN}\!=\!-150$};
\addlegendentry{$\mathrm{RIN}\!=\!-145$};
\addlegendentry{$\mathrm{RIN}\!=\!-140$};

%1.0000    1.9997    2.8145    2.9996    2.9729    2.9521    2.9412    2.9355    2.9334    2.9312
%\addplot[set1, densely dashed] coordinates{(1,1) (2,2) (3,2.8145) (4,2.9996) (5,2.9729) (6,2.9521) (7,2.9412) (8,2.9355) (9,2.9334) (10,2.9312)};

\end{axis}

\end{tikzpicture}
    \caption{GMI at $\text{OMA}=25$~dBm vs. $M$ for different RIN values (in dB/Hz). Filled markers denote the maximum GMI.}
    \label{fig:convergence}
    \vspace{-0.5em}
\end{figure}
\section{Conclusions}
\label{sec:conclusions}

We provided a novel discrete-time channel model with memory for optical communications with laser relative intensity noise. 
We derived an analytical expression for the variance of the signal-dependent noise that has a second-order polynomial form in terms of the transmit symbol. This model differs from the standard model in the literature where the variance depends only on the squared symbol. 
We showed that our model provides a more accurate representation of the continuous-time system.
We then studied the GMI for this channel with a memoryless decoding metric.
We observed that, for practical system parameters, using PAM modulation with more than $8$ points brings negligible gains. Future research avenues include constellation shaping and demapping with memory to improve AIRs of this channel.

\newpage
%\nocite{*}
\balance
\bibliographystyle{IEEEtran}
%\bibliography{IEEEabrv,references}
\bibliography{references}

% Generated by IEEEtran.bst, version: 1.14 (2015/08/26)
\begin{thebibliography}{10}
\providecommand{\url}[1]{#1}
\csname url@samestyle\endcsname
\providecommand{\newblock}{\relax}
\providecommand{\bibinfo}[2]{#2}
\providecommand{\BIBentrySTDinterwordspacing}{\spaceskip=0pt\relax}
\providecommand{\BIBentryALTinterwordstretchfactor}{4}
\providecommand{\BIBentryALTinterwordspacing}{\spaceskip=\fontdimen2\font plus
\BIBentryALTinterwordstretchfactor\fontdimen3\font minus \fontdimen4\font\relax}
\providecommand{\BIBforeignlanguage}[2]{{%
\expandafter\ifx\csname l@#1\endcsname\relax
\typeout{** WARNING: IEEEtran.bst: No hyphenation pattern has been}%
\typeout{** loaded for the language `#1'. Using the pattern for}%
\typeout{** the default language instead.}%
\else
\language=\csname l@#1\endcsname
\fi
#2}}
\providecommand{\BIBdecl}{\relax}
\BIBdecl

\bibitem{zhou2019beyond}
X.~Zhou, R.~Urata, and H.~Liu, ``Beyond 1 {Tb/s} intra-data center interconnect technology: {IM-DD} or coherent?'' \emph{Journal of Lightwave Technology}, vol.~38, no.~2, pp. 475--484, 2020 2020.

\bibitem{che2023modulation}
D.~Che and X.~Chen, ``Modulation format and digital signal processing for {IM-DD} optics at post-200{G} era,'' \emph{Journal of Lightwave Technology}, vol.~42, no.~2, pp. 588--605, Jan. 2024.

\bibitem{wei2020experimental}
J.~Wei, T.~Rahman, S.~Calabr{\`o}, N.~Stojanovic, L.~Zhang, C.~Xie, Z.~Ye, and M.~Kuschnerov, ``Experimental demonstration of advanced modulation formats for data center networks on 200 {Gb/s} lane rate {IMDD} links,'' \emph{Optics Express}, vol.~28, no.~23, pp. 35\,240--35\,250, Nov. 2020.

\bibitem{baseline_200G}
B.~Welch, J.~Ingham, E.~Bernier, and P.~Dawe, ``Baseline proposals for {200G/L} {PMD} specifications for single wavelength 500 m and 2 km standards,'' \emph{IEEE P802.3dj Ethernet Task Force}, Feb. 2023.

\bibitem{agrawal2013semiconductor}
G.~P. Agrawal and N.~K. Dutta, \emph{Semiconductor lasers}.\hskip 1em plus 0.5em minus 0.4em\relax Springer Science \& Business Media, 2013.

\bibitem{RIN_Keysight}
{Keysight Technologies}, \emph{Digital Communication Analyzer (DCA), Measure Relative Intensity Noise (RIN)}, July 2014.

\bibitem{szczerba20124}
K.~Szczerba, P.~Westbergh, J.~Karout, J.~S. Gustavsson, {\AA}.~Haglund, M.~Karlsson, P.~A. Andrekson, E.~Agrell, and A.~Larsson, ``4-{PAM} for high-speed short-range optical communications,'' \emph{Journal of Optical Communications and Networking}, vol.~4, no.~11, pp. 885--894, Oct. 2012.

\bibitem{rizzelli2023analytical}
G.~Rizzelli, P.~Torres-Ferrera, F.~Forghieri, and R.~Gaudino, ``An analytical model for performance estimation in modern high-capacity {IMDD} systems,'' \emph{Journal of Lightwave Technology}, vol.~42, no.~5, pp. 1443--1452, Mar. 2023.

\bibitem{maneekut2024reach}
R.~Maneekut, D.~J. Elson, Y.~Wakayama, N.~Yoshikane, and P.~Kaewplung, ``Reach extension of a 53.12 {Gbps} {PAM}-4 {IM/DD} signals with optimized extinction ratio in amplified multi-span {O-Band} transmission,'' \emph{IEEE Access}, vol.~12, pp. 34\,656--34\,667, Mar. 2024.

\bibitem{che2021does}
D.~Che, J.~Cho, and X.~Chen, ``Does probabilistic constellation shaping benefit {IM-DD} systems without optical amplifiers?'' \emph{Journal of Lightwave Technology}, vol.~39, no.~15, pp. 4997--5007, May 2021.

\bibitem{seimetz2009high}
M.~Seimetz, \emph{High-Order Modulation for Optical Fiber Transmission}.\hskip 1em plus 0.5em minus 0.4em\relax Springer, 2009.

\bibitem{hui2019introduction}
R.~Hui, \emph{{Introduction to Fiber-Optic Communications}}.\hskip 1em plus 0.5em minus 0.4em\relax Academic Press, 2019.

\bibitem{napoli2017digital}
A.~Napoli, M.~M. Mezghanni, S.~Calabro, R.~Palmer, G.~Saathoff, and B.~Spinnler, ``Digital predistortion techniques for finite extinction ratio {IQ} {M}ach--{Z}ehnder modulators,'' \emph{Journal of Lightwave Technology}, vol.~35, no.~19, pp. 4289--4296, Oct. 2017.

\bibitem{soriaga2007determining}
J.~B. Soriaga, H.~D. Pfister, and P.~H. Siegel, ``Determining and approaching achievable rates of binary intersymbol interference channels using multistage decoding,'' \emph{IEEE Transactions on Information Theory}, vol.~53, no.~4, pp. 1416--1429, Apr. 2007.

\bibitem{Kaplan1993_RatesExponentsCompondChan}
G.~Kaplan and S.~{Shamai (Shitz)}, ``Information rates and error exponents of compound channels with application to antipodal signaling in a fading environment,'' \emph{A\"{E}U}, vol.~47, no.~4, pp. 228--239, 1993.

\bibitem{Merhav1994_InfoRatesMismDecod}
N.~{Merhav}, G.~{Kaplan}, A.~{Lapidoth}, and S.~{Shamai (Shitz)}, ``On information rates for mismatched decoders,'' \emph{IEEE Transactions on Information Theory}, vol.~40, no.~6, pp. 1953--1967, Nov. 1994.

\bibitem{Martinez2009_TransIT_bicm}
A.~{Martinez}, {Guill\'{e}n i F\`{a}bregas}, G.~{Caire}, and F.~M.~J. {Willems}, ``Bit-interleaved coded modulation revisited: A mismatched decoding perspective,'' \emph{IEEE Transactions on Information Theory}, vol.~55, no.~6, pp. 2756--2765, June 2009.

\bibitem{Szczecinski2015_BICMbook}
L.~Szczecinski and A.~Alvarado, \emph{Bit-Interleaved Coded Modulation: Fundamentals, Analysis, and Design}.\hskip 1em plus 0.5em minus 0.4em\relax Chichester, UK: John Wiley \& Sons, 2015.

\bibitem{scarlett2014mismatched}
J.~Scarlett, A.~Martinez, and A.~G. i~F{\`a}bregas, ``Mismatched decoding: Error exponents, second-order rates and saddlepoint approximations,'' \emph{IEEE Transactions on Information Theory}, vol.~60, no.~5, pp. 2647--2666, May 2014.

\bibitem{pang2020}
X.~Pang, O.~Ozolins, R.~Lin, L.~Zhang, A.~Udalcovs, L.~Xue, R.~Schatz, U.~Westergren, S.~Xiao, W.~Hu \emph{et~al.}, ``200 {Gbps/lane IM/DD} technologies for short reach optical interconnects,'' \emph{Journal of Lightwave Technology}, vol.~38, no.~2, pp. 492--503, Jan. 2020.

\bibitem{Alvarado12b}
A.~Alvarado, F.~Br\"{a}nnstr\"{o}m, E.~Agrell, and T.~Koch, ``High-{SNR} asymptotics of mutual information for discrete constellations with applications to {BICM},'' \emph{IEEE Transactions on Information Theory}, vol.~60, no.~2, pp. 1061--1076, Feb. 2014.

\end{thebibliography}

\end{document}